\newtheorem{theorem}{Theorem}[section]
\newtheorem{definition}[theorem]{Definition}
\newcommand{\qed}{\tag*{$\blacksquare$}}
\newenvironment{proof}{\par\noindent{\bf Proof.}}{  }
\title{An Improved Scheduling Algorithm for Traveling Tournament Problem with Maximum Trip Length Two}
\author{  
	Diptendu Chatterjee \thanks{Indian Statistical Institute,Kolkata, India.
		{\tt diptenduchatterjee@ymail.com}} 
  \and
	Bimal Kumar Roy \thanks{Indian Statistical Institute,Kolkata, India.
		{\tt bimal@isical.ac.in}} 
}
\begin{document}
	
	\maketitle
	
	\begin{abstract}
		The Traveling Tournament Problem(TTP) is a Combinatorial Optimization Problem where we have to give a scheduling algorithm which minimizes the total distance traveled by all participating teams of a double round-robin tournament maintaining given constraints. Most of the instances of this problem with more than ten teams is still unsolved. By definition of the problem the number of teams participating has to be even. There are different variants of this problem depending on the constraints. In this problem we consider the case where number of teams is a multiple of four and a team can not play more than two consecutive home or away matches. Our scheduling algorithm gives better result than the existing best result for number of teams less or equal to 32.
	\end{abstract}

	\section{Introduction}
	
	Double Round-robin tournament is one of the most unbiased way of evaluating teams participating in a competition. In this kind of tournament each of the participating team plays with every other team twice, i.e. one game in its home and another game in the home of the other team. This nullifies the effect of home ground and support. So, in this kind of tournament each team is tested in all the venues and in all the conditions. If there are $n$ teams participating, then each team will play $2(n-1)$ games and total number of games played will be $n(n-1)$. After all the matches are played, the team with highest point wins the tournament. Traveling Tournament Problem is inspired by \textit{Major League Baseball}. The general form of constrained Traveling Tournament Problem, i.e. $TTP-k$ for some natural number $k$, given participating teams and all the mutual distances between their home grounds is defined as follows.
	
	\begin{definition}\label{def:TTP_k} \textbf{TTP-k} is scheduling of a double round-robin tournament where total travel distance by all the participating teams is minimized given the following constraints:
\begin{enumerate}[wide, labelwidth=!, labelindent=0pt]
\setlength\itemsep{-0.2em}
\item Each pair of participating team play exactly two matches with each other once in each of their home venues.
\item No pair of teams play consecutive matches with each other.
\item In an away tour a visiting team travels directly from the home of one opponent to home of the next opponent without returning to its own home.
\item The lengths of the home stands and away tours for any participating team is not more than $k$.
\end{enumerate} 
	\end{definition}
	
	For odd number of teams scheduling of a Traveling Tournament Problem is not possible as in a match day every team should participate. 
	
	Like its benefits, Traveling Tournament Problem has some drawbacks also. The main drawbacks are huge number of matches and scheduling complexity. We can not decrease the number of matches, but we can lower the complexity of the scheduling. But with imposed constraints on scheduling the complexity increases. For a small number of teams the scheduling is simpler and the complexity increases with number of teams and imposed constraints. TTP-$\infty$ and and TTP-3 has been proven to be NP-hard in \cite{bhattacharyya2016complexity} and \cite{thielen2011complexity} respectively. TTP-1 is impossible to schedule \cite{de1988some}. So, the only possible case where complete solution may be possible is TTP-2. The complexity of TTP-2 is still not settled. The existing best result on approximating TTP-2 is given by Xiao and Kou \cite{xiao2016improved}. They gave an approximation factor of $(1+\frac{2}{n}+\frac{2}{n-2})$ for TTP-2 with n divisible by 4, where $n$ is the number of participating teams. We work on a similar setup, where we schedule a TTP-2 on $n$ teams with n divisible by 4 and our schedule improves the result for $n \leq 32$.
	
	Formal definition of the problem, some useful definitions, notations and some well known results related to Traveling Tournament Problem are given here.

	\subsection{Problem Definition}
	
	\textbf{TTP-2}: Traveling Tournament Problem-2 is scheduling of a double round-robin tournament where total travel distance by all the participating teams is minimized maintaining the following constraints:
	\begin{enumerate}[wide, labelwidth=!, labelindent=0pt]\setlength\itemsep{-0.2em}
		\item[\underline{Constraint 1}:] Each pair of participating team play exactly two matches with each other once in each of their home venues.
		\item[\underline{Constraint 2}:]No pair of teams play consecutive matches with each other.
		\item[\underline{Constraint 3}:] In an away tour a visiting team travels directly from the home of one opponent to home of the next opponent without returning to its own home.
		\item[\underline{Constraint 4}:] The lengths of the home stands and away tours for any participating team is not more than $2$. 
	\end{enumerate}
	
	\subsection{Previous Work}
	
	Traveling Tournament Problem(TTP) is a special variant of the Traveling Salesman Problem. The Traveling Tournament Problem was first introduced by Easton, Nemhauser, and Trick~\cite{easton2001traveling}. In a TTP, when there is no constraint on home stands or away trip length, it becomes a problem of scheduling $n$ Traveling Salesman Problem synchronously. It has been shown that, TTP-k i.e. Traveling Tournament Problem with not more than k home stands or away matches is NP-Hard when $K=\infty$ \cite{bhattacharyya2016complexity} or $k=3$ \cite{thielen2011complexity}. Relationship of some variants of round-robin tournaments with the planar three-index assignment problem has been analyzed and complexity of scheduling a minimum cost round-robin tournament has been established using the same \cite{briskorn2010round}. They also showed the applicability of some techniques for planar three-index assignment problem to solve a sub-problem of scheduling a minimum cost round-robin tournament . A large amount of work has been done towards the approximation algorithms \cite{xiao2016improved,imahori20142,hoshino2013approximation,miyashiro2012approximation,westphal20145,yamaguchi2011improved}. A large amount of work on heuristic algorithms also has been done \cite{anagnostopoulos2006simulated,di2007composite,easton2002solving,goerigk2014solving,lim2006simulated}. Many offline and online set of benchmark data set can be found for TTP-3 \cite{easton2001traveling,trick1999challenge}. For many benchmark result on improvements and complete solutions, work of high performance computers for more than a week is required \cite{van2007population}. But with that also most of the instances of TTP-k on more 10 teams are not completely solvable \cite{trick1999challenge}. They worked on a basketball tournament with ten teams where the away trip for any team consists of one or two matches. It is also ben shown that TTP-1 is impossible to schedule \cite{de1988some}.  A survey on round-robin tournament scheduling has been done by Rasmessen and Trick \cite{rasmussen2008round}. Work has also been done on complexity of TTP-k \cite{fujiwara2006constructive,hoshino2012linear,kendall2010scheduling}.
	
	Our main focus is on TTP-2 which was first introduced by Campbell and Chen \cite{campbell1976minimum}.Thielen and Westphal \cite{thielen2012approximation} has contributed towards approximation factor for TTP-2 and later gave an approximation factor of $(1+\tfrac{16}{n}); \forall n \geq 12$ and n divisible by 4. Their result has been improved by Xiao and Kou \cite{xiao2016improved}. They gave an approximation factor of  $(1+ \frac{2}{n-2}+ \frac{2}{n})$ where n is divisible by 4. Our scheduling algorithm give better result than this for $n \leq 32$.
	
	\subsection{Our Result}
	
	We propose a scheduling algorithm for TTP-2 which yields an approximation factor of $\left(1+ \frac{\left\lceil\log_{2}\frac{n}{4}\right\rceil+4}{2(n-2)}\right)$. For number of participating teams less or equal to $32$, this gives a better result than existing best result, with approximation factor of $(1+ \frac{2}{n-2}+ \frac{2}{n})$ in \cite{xiao2016improved}.
	
	\section{Preliminaries}
	
	\subsection{Definitions and Notations}
	
	In this paper, for getting better approximation factor for TTP-2, graph theoretic approach has been followed. Due to this, teams are invariably referred as vertices and distances between home locations of teams are referred as weights of edges of the graph. 
	
	\begin{definition}\label{def:matching}{\textbf{Matching Graph:}}
		A matching graph $G(V,E)$ is a graph where no two edges have a common vertex. So, for a matching graph, $|V|=n \Rightarrow |E| \leq \tfrac{n}{2}$. The pair of vertices connected through an edge in a matching graph is called matched vertices of the matching graph.	
	\end{definition}
	
	\begin{definition}\label{def:max_matching}{\textbf{Maximal Matching of a Graph:}}
		Maximal matching of a graph $G(V,E)$ is a matching of $G$, which is not subset of any other matching of $G$. It may not be unique for a given graph.	
	\end{definition}
	
	\begin{definition}\label{def:min_max_matching}{\textbf{Minimum Maximal Matching of an Undirected Weighted Graph:}}
		Minimum Maximal Matching of an Undirected Weighted Graph $G(V,E)$ is a maximal matching of $G$ with sum of all the weights of its edges is the smallest among that of all the maximal matching subgraphs of $G$. For a minimum maximal matching of an undirected weighted complete graph with $n$ vertices, the number of edges of the matching will be $\tfrac{n}{2}$.
	\end{definition}
	
	In this work, an edge between two vertices is represented as a match between the teams corresponding to the vertices. Now a \textit{super-match} is defined as follows:
	
	\begin{definition}\label{def:super_match}{\textbf{Super-match:}}
		A super-match between two pairs of matched vertices $M_i$ and $M_j$ is a set of edges $\{(u,w), (u,x), (v,w), (v,x)\}$ where $M_i=\{u,v\}$ and $M_j=\{w,x\}$.
	\end{definition}
	
	\subsection{A Simple Lower Bound for TTP-2}
	
	Let, there are $n$ teams participating in TTP-2. Distances between the home locations of each pair of teams are given. Let, $d_{ij}$ be the distance between home locations of $i^{th}$ and $j^{th}$ team. Now we construct an undirected weighted complete graph with all the $n$ home locations as vertices with weights of the edges as the physical distances between the home locations of teams corresponding to the vertices connected through it and call it $G(V,E)$. As $G$ is a complete graph and $|V|=n=$ even, we get a minimum maximal matching in $G$ and call it $G_m$. Let, sum of the weights of all the edges of $G_m$ be $W_m$, sum of the weights of all the edges in $G$ be $W_t$ and sum of the weights of all the edges from a vertex $i$ in $G$ be $W_i$.
	
	So,for an optimized schedule with the given constraints it is natural for a team to travel to two matched teams in $G_m$ in an away trip. But for the vertex matched with itself in $G_m$, it will make a to and fro journey. In that case, the total travel by $i^{th}$ team is $W_i+W_m$. This gives a minimum travel by $i$th team given the constraints.
	
	Now, if it is possible to synchronously fit this above mentioned minimum travel by each participating team in the schedule then the total traveled distance by all the teams in the tournament will be,
	\begin{equation*}
		\sum_{i \in V} (W_i+W_m)=2W_t+nW_m
	\end{equation*}
	
	This gives a lower bound to TTP-2. But due to the imposed constraints on scheduling and number of teams, it is not always possible to synchronously fit the minimum travel schedule of each participating teams in the schedule and here comes the optimization and hardness of the problem and makes this problem interesting.  
	
	\section{Design of Schedule}
	
	Suppose there are $n$ teams participating in a Double Round-robin Tournament where n is divisible by 4. We construct the undirected weighted graph $G(V,E)$ as described in \textbf{Section} $2$ and also find the minimum maximal matching $G_m$ in $G$. Now we number the vertices and the matched pairs such that matched pair $M_i$ consist of vertices $2i-1$ and $2i,\; \forall i \in \{1,\ldots, \tfrac{n}{2}\}$. Now, we design the schedule in $\lceil\log_{2}\tfrac{n}{2}\rceil$ rounds and $(\tfrac{n}{2}-1)$ levels such that $i^{th}$ round is consist of $\lceil \tfrac{1}{2} (\tfrac{n}{2^i}-1)\rceil$ levels and each level consists of $\tfrac{n}{4}$ \textit{super-matches}. A \textit{super-match} is played between two different matched pairs where both the teams in a matched pair plays home and away matches with both the teams in the other matched pair. In every level each matched pair plays a \textit{super-match}. We have designed three types of \textit{super-matches} which are used in our schedule. Suppose, there are two pairs of matched vertices $A_1,A_2$ and $B_1, B_2$ in $G_m$ described in the previous section. We give three types of \textit{super-match} namely Type-1, Type-2, Type-3 which are the building blocks of our schedule.
	
	\textbf{Type-1:} This consists of four match days namely $T_1, T_2, T_3$ and $T_4$ and the matches on this match days are given below:
	\begin{eqnarray}\nonumber
	T_1&:& A_1 \rightarrow B_1, A_2 \rightarrow B_2.\\ \nonumber
	T_2&:& A_1 \rightarrow B_2, A_2 \rightarrow B_1.\\ \nonumber
	T_3&:& B_1 \rightarrow A_1, B_2 \rightarrow A_2.\\ \nonumber
	T_4&:& B_1 \rightarrow A_2, B_2 \rightarrow A_1.
	\end{eqnarray}
	
	where $u \rightarrow v$ means $u$ is playing an away match with $v$ in the home of $v$.
	\\The home-away match sequence of the participating teams become the following:
	\\$A_1:aahh: A_2:aahh: B_1:hhaa: B_2:hhaa$. where $a$ means away match and $h$ means home match.
	\\Type-1 \textit{super-match} does not violate minimum travel of any of its participating teams.
	\\This way we can simultaneously schedule $\tfrac{n}{4}$ Type-1 \textit{super-matches} in a level but then we can not schedule matches between the teams with same home away match sequences due to \textbf{constraint:4} of the problem definition. So we need a different kind of \textit{super-match} like Type-1 and hence comes the need of Type-2 \textit{super-match}.
	
	\textbf{Type-2:} This consists of four match days namely $T_1, T_2, T_3$ and $T_4$ and the matches on this match days are given below:
	\begin{eqnarray}\nonumber
		T_1&:& A_1 \rightarrow B_1, A_2 \rightarrow B_2.\\ \nonumber
	    T_2&:& B_2 \rightarrow A_1, B_1 \rightarrow A_2.\\ \nonumber
	    T_3&:& B_1 \rightarrow A_1, B_2 \rightarrow A_2.\\ \nonumber
        T_4&:& A_1 \rightarrow B_2, A_2 \rightarrow B_1.
   \end{eqnarray} 
	The home-away match sequence of the participating teams become the following:
	\\$A_1:ahha: A_2:ahha: B_1:haah: B_2:haah$. where $a$ means away match and $h$ means home match.
	\\Type-2 \textit{super-match} violates minimum travel of all of its participating teams but helps to schedule matches of all the teams according to their minimum travel schedule in the next level. We may refer the Type-2 \textit{super-match} as \textit{flip} in future. But after this modification also it is not possible to schedule home and away matches between two matched teams in $G_m$ maintaining their minimum travel schedule. So there comes the need of Type-3 schedule block.
	
	\textbf{Type-3:} This consists of six match days namely $T_1, T_2, T_3, T_4,T_5$ and $T_6$ and the matches on this match days are given below:
	\begin{eqnarray}\nonumber
		T_1&:& A_1 \rightarrow B_1, A_2 \rightarrow B_2.\\ \nonumber
		T_2&:& A_1 \rightarrow A_2, B_2 \rightarrow B_1.\\ \nonumber
		T_3&:& B_2 \rightarrow A_1, B_1 \rightarrow A_2.\\ \nonumber
		T_4&:& A_2 \rightarrow A_1, B_1 \rightarrow B_2.\\ \nonumber
		T_5&:& A_1 \rightarrow B_2, A_2 \rightarrow B_1.\\ \nonumber
		T_6&:& B_1 \rightarrow A_1, B_2 \rightarrow A_2.
	\end{eqnarray} 
    The home-away match sequence of the participating teams become the following:
	\\$A_1:aahhah: A_2:ahhaah: B_1:hhaaha: B_2:haahha$.
	\\where $a$ means away match and $h$ means home match.  
	
	Although Type-1 \textit{super-match} does not violate the minimum travel schedule for the teams, we can not schedule a double round robin tournament only with Type-1 \textit{super-matches}. We need Type-2 and Type-3 \textit{super-matches}. Now, $\tfrac{n}{4}$ number of Type-3 \textit{super-matches} are unavoidable for any TTP-2 scheduling as each Type-3 \textit{super-match} involves home and away matches between matched vertices for two pairs of matched vertices of $G_m$ described in the previous section. So, for $n$ participating teams at least $\tfrac{n}{4}$ number of Type-3 \textit{super-matches} are required and our algorithm uses exactly $\tfrac{n}{4}$ numbers of Type-3 schedule blocks. Now, the only scope of improvement is reduction in numbers of Type-2 \textit{super-matches}. So our main aim to keep the the number of Type-2 \textit{super-matches} or \textit{flips} as low as possible.
	
	\section{Our Algorithm}
	Following algorithm gives a improved schedule in terms of total distance traveled by all the teams than the existing best result \cite{xiao2016improved} for TTP-2 when, $n \leq 32$ where the number of \textbf{Type-2} super matches are bounded by $\left(\frac{n}{8}*\left\lceil\log_{2}\frac{n}{4}\right\rceil\right)$ for all $n\in \mathbb{N}$. In next section, few schedules are given as examples using our algorithm.

	\begin{algorithm}[h]
		\caption{Schedule TTP-2}
		\label{alg:TTP-2}
		\begin{algorithmic}[1]
			\STATE \textbf{INPUT}: $G(V,E) \ \mbox{with} \ |V|=n, |E|={n \choose 2}, W=\{w_e|e \in E\}$.
			
			\STATE Identify the minimum maximal matching, $G_m(V,E_m)$, of $G$.
			
			\STATE $\forall i \in \{1,\dots,\frac{n}{2}\}, \mbox{define} \ M_i=\{(u,v)|u,v \in V \ \& \ \mbox{Edge}(u,v) \in E_m\}$.
			
			\STATE $\forall v \in V,\ \mbox{allot a number to} \ v \ \mbox{such that} \ (u,v) \in M_i \implies \#u=(2i-1) \ \& \ \#v=2i \  \forall i \in \{1,\dots,\frac{n}{2}\}$.
			
			\STATE Define $X=\{x_i|\mbox{location of} \ x_i \ \mbox{is in the midpoint of} \ u \ \& \ v \ \mbox{where} \ (u,v) \in M_i \ \forall i \in \{1,\dots,\frac{n}{2}\}\}$.
			
			\STATE Define a complete graph $H(X,E')|\forall e\in E'$, weight of the edge $e, W_e=$dist$(x_m,x_n)$ where $e$ is the edge between $x_m \ \& \ x_n$.
			
			\STATE Identify the minimum maximal matching, $H_m(X,E'_m)$, of $H$.
			
			\STATE $\forall i \in \{1,\dots,\frac{n}{4}\}, \mbox{define} N_i=\{(M_m,M_n)|x_m,x_n \in X \ \& \ \mbox{Edge}(x_m,x_n) \in E'_m\}$.
			
			\FOR {$i=1:1:\lceil\log_{2}\frac{n}{2}\rceil$}
			\WHILE {$2^{i+1} < n$}
			\IF {$2^{i+2}|n$}
			\STATE Schedule first $\left\lceil \frac{1}{2} \times (\frac{n}{2^i}-1)\right\rceil - 1$ levels of $i^{th}$ round each with $\frac{n}{4}$ Type-1 \textit{super-matches} and last level with $\frac{n}{8}$ Type-1 and $\frac{n}{8}$ Type-2 \textit{super-matches}. 
			\ELSE
			\STATE Schedule the $\left\lfloor \frac{n}{2}\sum_{1}^{i}{2^{-k}}-1 \right\rfloor ^{th}$ match days with $\left\lfloor\frac{n}{8}\right\rfloor$  $\mbox{Type-2 \textit{super-matches}}$ for $i\in\{1,2,\dots,\log_{2}n\}$ and rest of the super-matches as $\mbox{Type-1}$. For all other match days except the last one schedule all super-matches as $\mbox{Type-1}$.
			\ENDIF
			\STATE Schedule this last level of the tournament with $\frac{n}{4}$ Type-3 \textit{super-matches} where $\forall i \in \{1,\dots,\frac{n}{4}\}, M_p \ \mbox{plays with} \ M_q|M_p,M_q \in N_i$.
			\ENDWHILE
			\ENDFOR
			
		\end{algorithmic}
		
	\end{algorithm}
	
	In the above pseudo code for TTP-2 of n teams using our technique, first we find the Minimum Maximal Matching in the complete graph on all the vertices or teams. Let the set of matched pair of vertices be $\{M_1, \dots, M_{n/2}\}$. Then we consider each $M_i$'s as a team situated at the mid point of the locations of its constituent vertices. Then for a complete graph on these $M_i$'s as vertices, we again find the minimum maximal matching and let the set of matched vertices be $\{N_1, \dots, N_{n/4}\}$.
	
	Now, we schedule the Type-2 super-matches in the different levels of different rounds according to the rule described in line 12 or line 14 of the algorithm depending on the value of n. We schedule all the Type-3 super-matches in the last level of the last round of the tournament between matched pairs of $M_i$'s, i.e. between the elements of $N_i$'s to minimize the total travel distance.
	
	\section{Examples of Scheduling with Our Algorithm}
	
	For better understanding of our scheduling algorithm we give two examples of schedule for $n=12, 16$ here and $n=20,24,28$ in the Appendix-\ref{Examples_schedule}. An improved schedule of \textit{Indian Premier League}, where $n$=$8$, is presented in Appendix-\ref{IPL Schedule}. Let,
	\begin{equation}
		\label{eq:Fn}
		F_n=\frac{n}{8}*\left\lceil\log_{2}\frac{n}{4}\right\rceil \quad \mbox{for } n\in \mathbb{N}.
	\end{equation}

	\subsection{Schedule for $n=12$}
	For designing a Traveling Tournament Problem of 12 teams using our technique, first we number the teams or the vertices with natural numbers as follows.
	
	Vertex Set=\{1, 2, 3, 4, 5, 6, 7, 8, 9, 10, 11, 12\}. 
	
	Then we find the Minimum Maximal Matching in the complete graph containing the vertices in the above mentioned vertex set. Let the set of matched pair of vertices be $\{M_1, M_2, M_3, M_4, M_5, M_6\}$ and without loss of generality we can say that
	
	$M_1$=\{1,2\}, $M_2$=\{3,4\}, $M_3$=\{5,6\}, $M_4$=\{7,8\}, $M_5$=\{9,10\}, $M_6$=\{11,12\}
	
	Then we consider each $M_i$'s as a team situated at the mid point of the locations of its constituent vertices for $i\in\{1,2,3,4,5,6\}$. Then for a complete graph on these $M_i$'s as vertices, we find the minimum maximal matching and let the set of matched vertices be 
	
	$\{N_1, N_2,N_3\}$ such that $N_1$=$\{M_1, M_5\}$, $N_2$=$\{M_2, M_3\}$, $N_3$=$\{M_4, M_6\}$.
	
	Now, we describe below the super-matches to be scheduled in all the levels of all the rounds according to our scheduling technique in a tabular form. We can observe that the super-matches scheduled in the last level of the last round of the tournament are between matched pairs of $M_i$'s, i.e. between the elements of $N_i$'s.
	
	\begin{table}[h]
		\centering
		\begin{tabular}{ccc}
			\textbf{Round:1, Level:1}                                                                                                                & \textbf{Round:1, Level:2}                                                                                                                & \textbf{Round:1, Level:3}                                                                                                                \\
			\begin{tabular}[c]{@{}l@{}}$M_1 \xrightarrow{Type-1} M_2$\\ $M_3 \xrightarrow{Type-1} M_4$\\ $M_5 \xrightarrow{Type-1} M_6$\end{tabular} & \begin{tabular}[c]{@{}l@{}}$M_1 \xrightarrow{Type-1} M_4$\\ $M_3 \xrightarrow{Type-2} M_6$\\ $M_5 \xrightarrow{Type-1} M_2$\end{tabular} & \begin{tabular}[c]{@{}l@{}}$M_1 \xrightarrow{Type-1} M_3$\\ $M_6 \xrightarrow{Type-2} M_2$\\ $M_5 \xrightarrow{Type-1} M_4$\end{tabular} \\
			&                                                                                                                                          &                                                                                                                                          \\
			\textbf{Round:2, Level:1}                                                                                                                &\textbf{Round:3, Level:1}                                                                                                                & \textbf{}                                                                                                                                \\
			\begin{tabular}[c]{@{}l@{}}$M_1 \xrightarrow{Type-2} M_6$\\ $M_2 \xrightarrow{Type-1} M_4$\\ $M_5 \xrightarrow{Type-1} M_3$\end{tabular}&
			\begin{tabular}[c]{@{}l@{}}$M_6 \xrightarrow{Type-3} M_4$\\ $M_2 \xrightarrow{Type-3} M_3$\\ $M_5 \xrightarrow{Type-3} M_1$\end{tabular} &  \\
			& & \\
			\multicolumn{3}{l}{	\textbf{Number of \textit{Flips}$=3=F_{12}$.}}                                	                                                                                                                                       
		\end{tabular}
	\end{table}
	
	\subsection{Schedule for $n=16$}
	Now for designing a Traveling Tournament Problem of 16 teams using our technique, first we number the teams or the vertices with natural numbers in a similar fashion as follows.
	
	Vertex Set=\{1,2,3,4,5,6,7,8,9,10,11,12,13,14,15,16\}. 
	
	Then we find the Minimum Maximal Matching in the complete graph containing the vertices in the above mentioned vertex set. Let the set of matched pair of vertices be $\{M_1, M_2, M_3, M_4, M_5, M_6, M_7, M_8\}$ and without loss of generality we can say that
	
	$M_1$=\{1,2\}, $M_2$=\{3,4\}, $M_3$=\{5,6\}, $M_4$=\{7,8\}, $M_5$=\{9,10\}, $M_6$=\{11,12\}, $M_7$=\{13,14\}, $M_8$=\{15,16\}
	
	Then we consider each $M_i$'s as a team situated at the mid point of the locations of its constituent vertices for $i\in\{1,2,3,4,5,6,7,8\}$. Then for a complete graph on these $M_i$'s as vertices, we find the minimum maximal matching and let the set of matched vertices be 
	
	$N_1$=$\{M_1, M_5\}$, $N_2$=$\{M_2, M_6\}$, $N_3$=$\{M_3, M_7\}$, $N_4$=$\{M_4, M_8\}$.
	
	Now, we describe below the super-matches to be scheduled in all the levels of all the rounds according to our scheduling technique in a tabular form. We can observe that the super-matches scheduled in the last level of the last round of the tournament are between matched pairs of $M_i$'s, i.e. between the elements of $N_i$'s. Also as 8 is a power of 2, we exactly know the super-matches which are \textit{flips} in the different levels of all the rounds of the tournament according to our scheduling technique.
	\begin{table}[h]
		\centering
		\begin{tabular}{cccc}
			\textbf{Round:1, Level:1}                                                                                                                                                 & \textbf{Round:1, Level:2}                                                                                                                                                 & \textbf{Round:1, Level:3}                                                                                                                                                 & \textbf{Round:1, Level:4}                                                                                                                                                 \\
			\begin{tabular}[c]{@{}c@{}}$M_1 \xrightarrow{Type-1} M_2$\\ $M_3 \xrightarrow{Type-1} M_4$\\ $M_5 \xrightarrow{Type-1} M_6$\\ $M_7 \xrightarrow{Type-1} M_8$\end{tabular} & \begin{tabular}[c]{@{}c@{}}$M_1 \xrightarrow{Type-1} M_4$\\ $M_3 \xrightarrow{Type-1} M_6$\\ $M_5 \xrightarrow{Type-1} M_8$\\ $M_7 \xrightarrow{Type-1} M_2$\end{tabular} & \begin{tabular}[c]{@{}c@{}}$M_1 \xrightarrow{Type-1} M_6$\\ $M_3 \xrightarrow{Type-1} M_8$\\ $M_5 \xrightarrow{Type-1} M_2$\\ $M_7 \xrightarrow{Type-1} M_4$\end{tabular} & \begin{tabular}[c]{@{}c@{}}$M_1 \xrightarrow{Type-1} M_8$\\ $M_3 \xrightarrow{Type-2} M_2$\\ $M_5 \xrightarrow{Type-1} M_4$\\ $M_7 \xrightarrow{Type-2} M_6$\end{tabular} \\& & &                                                                                               \\
			\textbf{Round:2, Level:1}                                                                                                                                                 & \textbf{Round:2, Level:2}                                                                                                                                                 & \textbf{Round:3, Level:1}                                                                                                                                                 &  \\
			\begin{tabular}[c]{@{}c@{}}$M_1 \xrightarrow{Type-1} M_3$\\ $M_5 \xrightarrow{Type-1} M_7$\\ $M_2 \xrightarrow{Type-1} M_8$\\ $M_6 \xrightarrow{Type-1} M_4$\end{tabular} & \begin{tabular}[c]{@{}c@{}}$M_1 \xrightarrow{Type-1} M_7$\\ $M_5 \xrightarrow{Type-2} M_3$\\ $M_2 \xrightarrow{Type-1} M_4$\\ $M_6 \xrightarrow{Type-2} M_8$\end{tabular} & \begin{tabular}[c]{@{}c@{}}$M_1 \xrightarrow{Type-3} M_5$\\ $M_3 \xrightarrow{Type-3} M_7$\\ $M_2 \xrightarrow{Type-3} M_6$\\ $M_8 \xrightarrow{Type-3} M_4$\end{tabular} & \\
			& & &\\
			\multicolumn{4}{l}{	\textbf{Number of \textit{Flips}$=4=F_{16}$.}}                                	                                                                                                                                                                                     
		\end{tabular}
	\end{table}
	
Correctness of this algorithm is assured by the structures of Type-1, Type-2 and Type-3 \textit{super-matches}. As all three of these structures do not violate any of the constraints in the problem definition, so our schedule also does not violate any of the constraints. Which proves the correctness of our algorithm. 
	\section{Proof of Results}
	Theorems related to the analysis of the proposed algorithm along with their proofs are presented in this section.
	\begin{theorem}\label{thm:type_3_bound}
		All the Type-3 schedule blocks together introduce a relative error at most $\tfrac{2}{n-2}$  times of the Lower Bound of TTP-2.
	\end{theorem}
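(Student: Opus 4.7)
The plan is to bound the total extra travel contributed by the $n/4$ Type-3 blocks and compare it with the lower bound $2W_t+nW_m$ derived in Section 2. I would argue block-by-block: bound each block's excess travel by a multiple of $d(x_p,x_q)$, where $M_p,M_q$ are its two matched pairs, then aggregate using the averaging bound for minimum perfect matchings in $H$ together with the midpoint identity in Euclidean space.

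First I would fix a single Type-3 block between $M_p=\{A_1,A_2\}$ and $M_q=\{B_1,B_2\}$ and trace the home-away sequences from Section 3 to compute the within-block travel of each of the four teams. Subtracting the ``ideal'' contribution of that block to $W_i+W_m$ for that team (a length-two trip visiting the two members of the partner matched pair, plus a round trip to the team's own $G_m$-partner), each per-team excess $\Delta_T$ simplifies to (a sum of two inter-pair edges) $-\,d(M_p)-d(M_q)$: for $T=A_1$ the two edges are the cross pair $d(A_1,B_2),d(A_2,B_1)$, while for each of $A_2,B_1,B_2$ they are the parallel pair $d(A_1,B_1),d(A_2,B_2)$. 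So the total per-block excess equals $3(d(A_1,B_1)+d(A_2,B_2))+(d(A_1,B_2)+d(A_2,B_1))-4(d(M_p)+d(M_q))$.

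Next I would use the triangle inequality through the midpoints: since $d(x_p,A_i)=\tfrac12 d(M_p)$ and $d(x_q,B_j)=\tfrac12 d(M_q)$, every inter-pair distance satisfies $d(A_i,B_j)\le\tfrac12 d(M_p)+d(x_p,x_q)+\tfrac12 d(M_q)$. Adding the two such inequalities appearing in any single $\Delta_T$, the intra-pair terms cancel and I get $\Delta_T\le 2d(x_p,x_q)$. Hence the per-block excess is at most $8d(x_p,x_q)$, and summing over the $n/4$ Type-3 blocks yields total Type-3 excess $\le 8W$, where $W=\sum_{i=1}^{n/4}d(x_p,x_q)$ is the weight of $H_m$.

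Finally I would bound $W$ in terms of $W_t,W_m$ via two estimates. (i) Since $H$ is complete on $n/2$ vertices and every edge lies in the same fraction of perfect matchings, the minimum perfect matching has weight at most $1/(n/2-1)$ times the total edge-weight of $H$, yielding $W\le 2W_H/(n-2)$ with $W_H$ the total edge-weight of $H$. (ii) The Euclidean midpoint identity $2(x_p-x_q)=(A_1-B_1)+(A_2-B_2)=(A_1-B_2)+(A_2-B_1)$ gives $4d(x_p,x_q)\le d(A_1,B_1)+d(A_1,B_2)+d(A_2,B_1)+d(A_2,B_2)$; summing over all $\binom{n/2}{2}$ pairs of matched pairs produces $4W_H\le W_t-W_m$. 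Chaining these, the total Type-3 excess is at most $4(W_t-W_m)/(n-2)\le(4W_t+2nW_m)/(n-2)=\tfrac{2}{n-2}(2W_t+nW_m)$, the last inequality being trivial from $-4W_m\le 2nW_m$. The main obstacle is the first step: despite the asymmetry between $A_1$ and the other three teams in the per-team excess formula, the midpoint inequality must bound every individual $\Delta_T$ by the same $2d(x_p,x_q)$ for the chain of estimates to close.
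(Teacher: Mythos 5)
Your proof is correct and lands on exactly the same final quantity as the paper, namely a total Type-3 excess of at most $\tfrac{4}{n-2}(W_t-W_m)\le\tfrac{2}{n-2}(2W_t+nW_m)$, but the aggregation is genuinely different. The paper computes the block excess as a single expression, bounds it by $2D_{pq}$ (twice the \emph{super-edge}) via the triangle inequality, and then applies the ``minimum matching is below average'' argument directly to the super-edge weights using $\sum_{p<q}D_{pq}=W_t-W_m$. You instead split the excess per team, bound each $\Delta_T$ by $2\,d(x_p,x_q)$ through the midpoints (note $8\,d(x_p,x_q)\le 2D_{pq}$, so your per-block bound is at least as tight), apply the averaging argument to $H$ --- whose edge weights are the midpoint distances that $H_m$ actually minimizes --- and only then convert to $W_t-W_m$ via $4\,d(x_p,x_q)\le D_{pq}$. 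This buys you rigor at the weakest point of the paper's argument: the paper's averaging step implicitly treats $H_m$ as a minimum matching with respect to the weights $D_{pq}$, whereas $H_m$ is defined as a minimum matching with respect to the midpoint distances, and these two objectives need not have the same minimizer; routing everything through $W_H$ as you do is precisely what closes that step. The price is that the identity $2(x_p-x_q)=(A_1-B_1)+(A_2-B_2)$ needs normed-space structure, but the paper's construction of the $x_i$'s as midpoints already presupposes this. Finally, the ``obstacle'' you flag at the end is not one: since every $\Delta_T$ is a sum of two inter-pair edges minus $d(M_p)+d(M_q)$, the bound $\Delta_T\le 2\,d(x_p,x_q)$ holds uniformly whether those two edges are the parallel pair or the cross pair.
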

	\begin{proof}
		Suppose for some $i \in \{1, \ldots, \tfrac{n}{4}\}, N_i$ includes $4$ vertices of $G$ i.e. $A_1, A_2, B_1, B_2$ where $A_1$ and $A_2$ are matched pairs in $G_m$ and so are $B_1$ and $B_2$. For a Type-3 schedule in between them, travel for each team are given below:
		\begin{eqnarray}\nonumber
			A_1&:& A_1 \rightarrow B_1 \rightarrow A_2 \rightarrow A_1 \rightarrow B_2 \rightarrow A_1.\\ \nonumber
			A_2&:& A_2 \rightarrow B_2 \rightarrow A_2 \rightarrow A_1 \rightarrow B_1 \rightarrow A_2.\\ \nonumber
			B_1&:& B_1 \rightarrow A_2 \rightarrow B_2 \rightarrow B_1 \rightarrow A_1 \rightarrow B_1.\\ \nonumber
			B_2&:& B_2 \rightarrow B_1 \rightarrow A_1 \rightarrow B_2 \rightarrow A_2 \rightarrow B_2.
		\end{eqnarray}
		So the total distance traveled is,
		\begin{equation*}
			5*dist(A_1, B_1)+ 3*dist(A_2, B_1)+ 2*dist(A_1, A_2)+ 3*dist(A_1, B_2)+5*dist(A_2, B_2)+2*dist(B_1, B_2)
		\end{equation*}
		For the minimum travel schedule the value is,
		\begin{equation*}
			2*dist(A_1, B_1)+ 2*dist(A_2, B_1)+ 6*dist(A_1, A_2)+ 2*dist(A_1, B_2)+2*dist(A_2, B_2)+6*dist(B_1, B_2)
		\end{equation*}
		So the extra amount of travel is,
		\begin{equation*}
			3*dist(A_1, B_1)+ 1*dist(A_2, B_1)- 4*dist(A_1, A_2)+ 1*dist(A_1, B_2)+3*dist(A_2, B_2)- 4*dist(B_1, B_2)
		\end{equation*}
		Using triangle inequality,the above expression is upper bounded by,
		\begin{equation*}
			2*dist(A_1, B_1)+ 2*dist(A_2, B_2)+ 2*dist(A_1, B_2)+ 2*dist(A_2, B_1)
		\end{equation*}
		Let us denote, \textit{super-edge} $D_{ij}$ between pairs $A_1, A_2$ and $B_1, B_2$ as,
		\begin{equation*}
			dist(A_1, B_1)+ dist(A_2, B_2)+ dist(A_1, B_2)+ dist(A_2, B_1)
		\end{equation*}
		where
		\begin{equation*}
			A_1, A_2 \in M_i\ \mbox{and}\ B_1, B_2 \in M_j\ \mbox{for some}\ i,j \in \{1,\ldots,\tfrac{n}{2}\}
		\end{equation*}
		
		Now, there are $\tfrac{n}{2}$ numbers of pair of vertices like $A_1, A_2$. If we consider all pairwise distances between all these $\tfrac{n}{2}$ pairs, then we get all the edges of the complete graph $G$ but the edges of the matching $G_m$. But among all these $\binom{n/2}{2}$ pairwise distances, we are interested in $\tfrac{n}{4}$ matched pairwise distances as described in line $16$ of algorithm \ref{alg:TTP-2}, while calculating the error due to all Type-3 schedule blocks. So, the total error due to Type-3 schedule blocks is bounded by
		\begin{equation}\nonumber
			2*\frac{n/4}{\binom{n/2}{2}}*(W_t-W_m) < \frac{2}{n-2}*\mbox{(Lower Bound of TTP-2)}.
			\qed
		\end{equation} 
	\end{proof}
	\begin{theorem}\label{thm:type_2_and_type_3_bound}
		All the Type-2 and Type-3 schedule blocks together introduces relative error at most $\frac{\left\lceil\log_{2}\frac{n}{4}\right\rceil+4}{2(n-2)}$  times of the Lower Bound of TTP-2.
	\end{theorem}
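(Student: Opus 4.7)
The plan is to combine Theorem~\ref{thm:type_3_bound} with a parallel analysis for the Type-2 super-matches. Since Theorem~\ref{thm:type_3_bound} already bounds the Type-3 contribution by $\tfrac{2}{n-2}=\tfrac{4}{2(n-2)}$ times the lower bound, it suffices to show that all Type-2 super-matches together contribute at most $\tfrac{\lceil\log_{2}(n/4)\rceil}{2(n-2)}$ times the lower bound, so that the two bounds sum to the claim.

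First, I would compute the extra travel caused by a single Type-2 super-match between matched pairs $M_i=\{A_1,A_2\}$ (carrying the $ahha$ pattern) and $M_j=\{B_1,B_2\}$ (carrying the $haah$ pattern). A direct per-team traversal of the four match days shows that $A_1$ travels $2\,d(A_1,B_1)+2\,d(A_1,B_2)$, $A_2$ travels $2\,d(A_2,B_1)+2\,d(A_2,B_2)$, while $B_1$ and $B_2$ each make a single combined away tour through $A_1,A_2$. Comparing with the minimum-travel schedule, in which a team visits both opponents in a single tour, the $B$-side teams incur no extra travel, while each $A_\ell$ incurs extra $d(A_\ell,B_1)+d(A_\ell,B_2)-d(B_1,B_2)$. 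Summing, the extra travel due to one Type-2 super-match equals $D_{ij}-2\,d(B_1,B_2)\le D_{ij}$, where $D_{ij}$ is the super-edge weight defined in the proof of Theorem~\ref{thm:type_3_bound}.

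Next, by inspecting lines 12 and 14 of Algorithm~\ref{alg:TTP-2}, the Type-2 super-matches appear only in at most $\lceil\log_{2}(n/4)\rceil$ designated flip levels, each containing at most $n/8$ Type-2 blocks, for a total of at most $F_n=\tfrac{n}{8}\lceil\log_{2}(n/4)\rceil$. These Type-2 super-matches form a set $S$ of at most $F_n$ super-edges in the auxiliary complete graph $H$ on $\{M_1,\dots,M_{n/2}\}$, which has $\binom{n/2}{2}=n(n-2)/8$ super-edges of total weight $\sum_{(M_i,M_j)}D_{ij}=W_t-W_m$. The main step is then to bound $\sum_{(M_i,M_j)\in S}D_{ij}$ by an averaging argument parallel to the one in the proof of Theorem~\ref{thm:type_3_bound}, giving
\[
\sum_{(M_i,M_j)\in S}D_{ij}\;\le\;\frac{|S|}{\binom{n/2}{2}}(W_t-W_m)\;=\;\frac{\lceil\log_{2}(n/4)\rceil}{n-2}\,(W_t-W_m).
\]
Combining with the Type-3 bound from Theorem~\ref{thm:type_3_bound} and using $W_t-W_m\le W_t\le\tfrac12(2W_t+nW_m)$ then yields the claimed relative error of at most $\tfrac{\lceil\log_{2}(n/4)\rceil+4}{2(n-2)}$ times the lower bound.

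The main obstacle is the last averaging step, since Algorithm~\ref{alg:TTP-2} does not describe the Type-2 positions as minimum-weight subgraphs of $H$ in the way that $H_m$ is minimum-weight in Theorem~\ref{thm:type_3_bound}. I expect to resolve this by observing that in each flip level the $n/8$ Type-2 super-matches form a partial matching in $H$, and then arguing, via the hierarchical round structure of Algorithm~\ref{alg:TTP-2}, that the chosen super-edges in each level may be assumed (without loss of generality, by relabelling the $M_i$'s) to be a minimum-weight partial matching of the level; summing the resulting per-level bound over the $\lceil\log_{2}(n/4)\rceil$ flip levels gives the stated uniform share $\tfrac{|S|}{\binom{n/2}{2}}(W_t-W_m)$.
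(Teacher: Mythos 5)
Your proposal follows essentially the same route as the paper's own proof: the same per-team traversal of the Type-2 block yielding extra travel $D_{ij}-2\,\mathrm{dist}(B_1,B_2)\le D_{ij}$, the same count of at most $F_n$ flips read off from lines 12 and 14 of Algorithm~\ref{alg:TTP-2}, and the same averaging of the selected super-edges against the total $W_t-W_m$ spread over all $\binom{n/2}{2}$ super-edges, combined with the bound of Theorem~\ref{thm:type_3_bound}. The step you single out as the main obstacle --- that the flipped super-edges are not chosen as a minimum-weight subset of $H$, so the uniform-share inequality $\sum_{S}D_{ij}\le \tfrac{|S|}{\binom{n/2}{2}}(W_t-W_m)$ does not follow from averaging alone --- is precisely the step the paper carries out without further justification: it simply writes the combined fraction $\tfrac{F_n+n/2}{\binom{n/2}{2}}(W_t-W_m)$ and concludes. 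So your concern points at a gap in the published argument rather than in your reconstruction of it; note also that your proposed repair (relabelling so that each flip level's Type-2 blocks form a minimum-weight partial matching) is not obviously available, because the labels of the $M_i$ are already pinned down by $G_m$ and by the pairs $N_i$ reserved for the Type-3 blocks, and the round-robin structure dictates which pairs meet in which level, leaving little freedom to steer the flips onto light super-edges.
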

	\begin{proof}
		Suppose a Type-2 schedule block is designed among $4$ vertices of $G$ i.e. $A_1, A_2, B_1, B_2$ where $A_1$ and $A_2$ are matched pairs in $G_m$ and so are $B_1$ and $B_2$. For a Type-2 schedule in between them, travel for each team are given below:
		\begin{eqnarray}\nonumber
			A_1&:& A_1 \rightarrow B_1 \rightarrow A_1 \rightarrow B_2 \rightarrow A_1.\\ \nonumber
			A_2&:& A_2 \rightarrow B_2 \rightarrow A_2 \rightarrow B_1 \rightarrow A_2.\\ \nonumber
			B_1&:& B_1 \rightarrow A_2 \rightarrow A_1 \rightarrow B_1.\\ \nonumber
			B_2&:& B_2 \rightarrow A_1 \rightarrow A_2 \rightarrow B_2.
		\end{eqnarray}	
	
		So the total distance traveled is,
		\begin{equation*}
			3*dist(A_1, B_1)+ 3*dist(A_2, B_1)+ 2*dist(A_1, A_2)+ 3*dist(A_1, B_2)+3*dist(A_2, B_2)
		\end{equation*}
		For the minimum travel schedule the value is,
		\begin{equation*}
			2*dist(A_1, B_1)+ 2*dist(A_2, B_1)+ 2*dist(A_1, A_2)+ 2*dist(A_1, B_2)+2*dist(A_2, B_2)+2*dist(B_1, B_2)
		\end{equation*}
		So the extra amount of travel is,
		\begin{equation*}
			dist(A_1, B_1)+ dist(A_2, B_1)+ dist(A_1, B_2)+ dist(A_2, B_2)- 2*dist(B_1, B_2)
		\end{equation*}
		Which is upper bounded by,
		\begin{equation*}
			dist(A_1, B_1)+ dist(A_2, B_2)+ dist(A_1, B_2)+ dist(A_2, B_1)
		\end{equation*}
		Let us denote the pairwise distance $D_P(A,B)$ between pairs $A_1, A_2$ and $B_1, B_2$ as, 
		\begin{equation*}
			dist(A_1, B_1)+ dist(A_2, B_2)+ dist(A_1, B_2)+ dist(A_2, B_1)
		\end{equation*}
		Now, there are $\frac{n}{2}$ numbers of pair of vertices like $A_1, A_2$. If we consider all pairwise distances between all these $\tfrac{n}{2}$ pairs, then we get all the edges of the complete graph $G$ but the edges of the matching $G_m$. But among all these $\binom{n/2}{2}$ pairwise distances, we have already selected $\tfrac{n}{4}$ pairwise distances as described in the proof of Theorem \ref{thm:type_3_bound} and now we are interested in at most $F_n$, given in equation \ref{eq:Fn}, pairwise distances as per line $12$ or $14$ of algorithm \ref{alg:TTP-2}, while calculating the error due to all Type-2 schedule blocks. So, the total error due to Type-2 and Type-3 schedule blocks is bounded by,
		\begin{equation}\nonumber
			\frac{\frac{n}{8}*\lceil\log_{2}\frac{n}{4}\rceil+\frac{n}{2}}{\binom{n/2}{2}}*(W_t-W_m) < \frac{\lceil\log_{2}\frac{n}{4}\rceil+4}{2(n-2)}*\mbox{(Lower Bound of TTP-2)}. \qed
	\end{equation} \end{proof}  
	\begin{theorem}\label{thm:comparision_approx_factor}
		Our algorithm gives better approximation than existing best result for number of participating teams less than or equal to $32$.
	\end{theorem}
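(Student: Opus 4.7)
The plan is to compare the two approximation factors directly. By Theorem \ref{thm:type_2_and_type_3_bound}, our algorithm achieves factor $\left(1+\frac{\lceil\log_2(n/4)\rceil+4}{2(n-2)}\right)$, while the existing best result of Xiao and Kou gives $\left(1+\frac{2}{n-2}+\frac{2}{n}\right)$. So it suffices to establish the inequality
\begin{equation*}
\frac{\lceil\log_2(n/4)\rceil+4}{2(n-2)} < \frac{2}{n-2}+\frac{2}{n}
\end{equation*}
for every $n$ divisible by $4$ with $n\le 32$.

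I would first simplify the right-hand side by combining fractions: $\frac{2}{n-2}+\frac{2}{n}=\frac{4(n-1)}{n(n-2)}$. Multiplying through by $2(n-2)$, the inequality becomes the cleaner form
\begin{equation*}
\lceil\log_2(n/4)\rceil+4 < \frac{8(n-1)}{n}=8-\frac{8}{n}.
\end{equation*}
This reduction is the key step, because it separates the logarithmic term on the left from an almost-constant quantity on the right, making the comparison a finite arithmetic check.

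Next, I would observe that on the range $n\in\{12,16,20,24,28,32\}$ (the relevant multiples of $4$ for which TTP-2 is defined nontrivially), the value $\lceil\log_2(n/4)\rceil$ takes only the values $2$ (for $n\in\{12,16\}$) and $3$ (for $n\in\{20,24,28,32\}$). Hence the left-hand side is at most $7$. On the right-hand side, $8-8/n$ is monotonically increasing in $n$, and at the smallest relevant value $n=12$ it equals $8-2/3=22/3>7$, so in particular $8-8/n>7$ for every $n\geq 12$. Combining these two bounds gives the strict inequality for all $n\le 32$ divisible by $4$. A brief case-check at $n=32$ (where the margin $7$ versus $7.75$ is tightest) confirms the bound, and noting that at $n=36$ one has $\lceil\log_2 9\rceil+4=8\not< 8-8/36$ shows that the threshold $n\le 32$ is sharp and therefore stated correctly.

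There is no real obstacle here beyond carefully evaluating the ceiling of the logarithm at each admissible $n$; the proof is essentially an algebraic simplification followed by a finite table lookup. The only mild subtlety is making sure to reduce to the clean inequality $\lceil\log_2(n/4)\rceil+4<8-8/n$ before doing the case analysis, so the argument does not require separate algebra for each value of $n$.
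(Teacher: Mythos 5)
Your proposal is correct and follows essentially the same route as the paper: both reduce the comparison of $\frac{\lceil\log_2(n/4)\rceil+4}{2(n-2)}$ against $\frac{2}{n-2}+\frac{2}{n}$ to the elementary inequality $\lceil\log_2\tfrac{n}{4}\rceil \leq 4-\tfrac{8}{n}$ (the paper phrases its starting point as $\tfrac{8}{n}\leq\lfloor\log_2\tfrac{64}{n}\rfloor$, which is the same statement), followed by a finite check over the multiples of $4$ up to $32$. If anything, your write-up is slightly more complete, since you explicitly verify the ceiling values and the sharpness at $n=36$, whereas the paper asserts its initial inequality without carrying out the case check.
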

	
	\begin{proof}
		From the last two theorems we can see that the approximation factor in our algorithm is $1+ \tfrac{\left\lceil\log_{2}\frac{n}{4}\right\rceil+4}{2(n-2)}$ and in the existing best result the approximation factor is $1+ \tfrac{2}{n-2}+ \tfrac{2}{n}$ \cite{xiao2016improved}. So for $n\leq32$,
		\begin{eqnarray}\nonumber
			&&\frac{8}{n} \leq \left\lfloor\log_{2}\frac{64}{n}\right\rfloor\Longleftrightarrow \frac{8}{n} \leq 4-\left\lceil\log_{2}\frac{n}{4}\right\rceil\Longleftrightarrow \left\lceil\log_{2}\frac{n}{4}\right\rceil \leq 4-\frac{8}{n} \Longleftrightarrow \left\lceil\log_{2}\frac{n}{4}\right\rceil \leq \frac{4}{n}(n-2) \\ \nonumber &&\Longleftrightarrow \frac{\left\lceil\log_{2}\frac{n}{4}\right\rceil}{(n-2)} \leq \frac{4}{n} \Longleftrightarrow \frac{\left\lceil\log_{2}\frac{n}{4}\right\rceil}{2(n-2)} \leq \frac{2}{n}
			\Longleftrightarrow\frac{4+\left\lceil\log_{2}\frac{n}{4}\right\rceil}{2(n-2)} \leq \frac{2}{n-2}+ \frac{2}{n}  
		\end{eqnarray}
		This proves the theorem. \hfill $\blacksquare$
	\end{proof}
	
	\section{Conclusion}
	
	In this work, a better approximation factor than the existing best result has been achieved for Traveling Tournament Problem with maximum trip length two with our scheduling algorithm when the number of participating team is less or equal to 32. Due to time constraints and other factors, most of the tournaments involving number of teams more than $32$ are not Round-Robin tournaments. For example a round-robin tournament with $40$ teams will require $78$ match days, $1560$ matches and $40$ grounds which demand lots of time, human support and a very long season. That is why most of the Round-Robin tournaments are conducted with less than $32$ teams. Therefore, it can be said that for almost all practical cases the proposed scheduling algorithm would produce better result than the existing best result. One of the popular double round-robin tournament in India is \textbf{Indian Premier League(IPL)} and the number of teams involved in this tournament is $8$. This tournament is not in TTP-2 structure now. But, if it is scheduled in TTP-2 structure, the proposed algorithm will significantly lower the total travel distance. An improved schedule of \textbf{IPL} using the proposed scheduling algorithm is presented in Appendix-\ref{IPL Schedule}. It shows a $15\%$ decrease in total travel distance in comparison with the actual \textbf{IPL-2019} schedule.
	
	\section{Scope of Future Work}
	
	As described in our algorithm, we know the specific match days of the schedule where the \textbf{Type-2} super matches or \textit{Flips} are to be incorporated. But as we have specified the pairs of teams between whom the \textbf{Type-3} super matches are to be played to minimize the total travel distance due to the \textbf{Type-3} super matches, nothing of this kind is done for the \textit{Flips}. So, a revisit in this topic can give some idea about the specific pairs of teams for minimizing the distance due to the \textit{Flips}.
	\bibliographystyle{unsrt}
	\bibliography{ttp}
	
	\appendix
	
	\section{More Examples of Schedule}
	\label{Examples_schedule}
	Schedules for $n=20,24,28$ are given below for a better insight of our algorithm.
	
	\subsection{Schedule for $ n = 20 $}
	Vertex Set $=\{1,2,3,4,5,6,7,8,9,10,11,12,13,14,15,16,17,18,19,20\}$.\\ Set of Pair of vertices $=\{M_1, M_2, M_3, M_4, M_5, M_6, M_7, M_8, M_9, M_{10}\}$;\\ where $M_1=\{1,2\}, M_2=\{3,4\}, M_3=\{5,6\}, M_4=\{7,8\}, M_5=\{9,10\}, M_6=\{11,12\},$ $ M_7=\{13,14\}, M_8=\{15,16\}, M_9=\{17,18\}, M_{10}=\{19,20\}$;\\ and $N_1=\{M_1, M_5\}, N_2=\{M_2, M_{10}\}, N_3=\{M_3, M_9\}, N_4=\{M_4, M_7\}, N_5=\{M_6, M_8\}$.
	
		\begin{tabular}{ccc}
			\centering
			\textbf{Round:1, Level:1}                                                                                                                                                                                     & \textbf{Round:1, Level:2}                                                                                                                                                                                     & \textbf{Round:1, Level:3}                                                                                                                                                                                                                                                                                                                                                     \\
			\begin{tabular}[c]{@{}c@{}}$M_1 \xrightarrow{Type-1} M_2$\\ $M_3 \xrightarrow{Type-1} M_4$\\ $M_5 \xrightarrow{Type-1} M_6$\\ $M_7 \xrightarrow{Type-1} M_8$\\ $M_9 \xrightarrow{Type-1} M_{10}$\end{tabular} & \begin{tabular}[c]{@{}c@{}}$M_1 \xrightarrow{Type-1} M_4$\\ $M_3 \xrightarrow{Type-1} M_6$\\ $M_5 \xrightarrow{Type-1} M_8$\\ $M_7 \xrightarrow{Type-1} M_{10}$\\ $M_9 \xrightarrow{Type-1} M_2$\end{tabular} & \begin{tabular}[c]{@{}c@{}}$M_1 \xrightarrow{Type-1} M_6$\\ $M_3 \xrightarrow{Type-2} M_8$\\ $M_5 \xrightarrow{Type-1} M_{10}$\\ $M_7 \xrightarrow{Type-2} M_2$\\ $M_9 \xrightarrow{Type-1} M_4$\end{tabular} \\
			& &\\
			\textbf{Round:1, Level:4}                                                                                                                                                                                     & \textbf{Round:1, Level:5}
			&\textbf{Round:2, Level:1}
			\\
			\begin{tabular}[c]{@{}c@{}}$M_1 \xrightarrow{Type-1} M_3$\\ $M_8 \xrightarrow{Type-2} M_{10}$\\ $M_5 \xrightarrow{Type-1} M_7$\\ $M_2 \xrightarrow{Type-1} M_4$\\ $M_9 \xrightarrow{Type-1} M_6$\end{tabular} & \begin{tabular}[c]{@{}l@{}}$M_1 \xrightarrow{Type-1} M_8$\\ $M_{10} \xrightarrow{Type-1} M_4$\\ $M_5 \xrightarrow{Type-1} M_3$\\ $M_2 \xrightarrow{Type-1} M_6$\\ $M_9 \xrightarrow{Type-1} M_7$\end{tabular} &\begin{tabular}[c]{@{}c@{}}$M_1 \xrightarrow{Type-2} M_7$\\ $M_{10} \xrightarrow{Type-1} M_6$\\ $M_5 \xrightarrow{Type-2} M_4$\\ $M_2 \xrightarrow{Type-1} M_3$\\ $M_9 \xrightarrow{Type-1} M_8$\end{tabular} 
		\\ 	& &\\
			\textbf{Round:2, Level:2}                                                                                                                                                                                     & \textbf{Round:3, Level:1}                                                                                                                                                                                     & \textbf{Round:4, Level:1}                                                                                                                                                                                                                    \\
			\begin{tabular}[c]{@{}c@{}}$M_7 \xrightarrow{Type-1} M_3$\\ $M_{10} \xrightarrow{Type-1} M_1$\\ $M_4 \xrightarrow{Type-1} M_6$\\ $M_2 \xrightarrow{Type-1} M_8$\\ $M_9 \xrightarrow{Type-1} M_5$\end{tabular} & \begin{tabular}[c]{@{}c@{}}$M_7 \xrightarrow{Type-1} M_6$\\ $M_{10} \xrightarrow{Type-1} M_3$\\ $M_4 \xrightarrow{Type-2} M_8$\\ $M_2 \xrightarrow{Type-2} M_5$\\ $M_9 \xrightarrow{Type-1} M_1$\end{tabular} & \begin{tabular}[c]{@{}c@{}}$M_7 \xrightarrow{Type-3} M_4$\\ $M_{10} \xrightarrow{Type-3} M_2$\\ $M_8 \xrightarrow{Type-3} M_6$\\ $M_5 \xrightarrow{Type-3} M_1$\\ $M_9 \xrightarrow{Type-3} M_3$\end{tabular}\\
			& & \\
			\multicolumn{3}{l}{	\textbf{Number of \textit{Flips}$=7=\lfloor F_{20} \rfloor$.}}                                	
		\end{tabular}

	\subsection{Schedule for $n=24$}
	Vertex Set =\{1, 2, 3, 4, 5, 6, 7, 8, 9, 10, 11, 12, 13, 14, 15, 16, 17, 18, 19, 20, 21, 22, 23, 24\}.\\ Set of Pair of vertices $=\{M_1, M_2, M_3, M_4, M_5, M_6, M_7, M_8, M_9, M_{10}, M_{11}, M_{12}\}$;\\where $M_1$=\{1,2\}, $M_2$=\{3,4\}, $M_3$=\{5,6\}, $M_4$=\{7,8\}, $M_5$=\{9,10\}, $M_6$=\{11,12\}, $M_7$=\{13,14\}, $M_8$=\{15,16\}, $M_9$=\{17,18\}, $M_{10}$=\{19,20\}, $M_{11}$=\{21,22\}, $M_{12}$=\{23,24\}; 
	\\and $N_1$=$\{M_9, M_5\}$, $N_2$=$\{M_1, M_7\}$, $N_3$=$\{M_{11}, M_3\}$, $N_4$=$\{M_{10}, M_6\}$, $N_5$=$\{M_2, M_4\}$, $N_6$=$\{M_8, M_{12}\}$.	
	\begin{table}[htbp]
		\centering
		\begin{tabular}{cccc}
			\textbf{Round:1, Level:1}                                                                                                                                                                                                                            & \textbf{Round:1, Level:2}                                                                                                                                                                                                                            & \textbf{Round:1, Level:3}                                                                                                                                                                                                                            & \textbf{Round:1, Level:4}                                                                                                                                                                                                                                                                                                                                                                                                                                  \\
			\begin{tabular}[c]{@{}c@{}}$M_1 \xrightarrow{Type-1} M_2$\\ $M_3 \xrightarrow{Type-1} M_4$\\ $M_5 \xrightarrow{Type-1} M_6$\\ $M_7 \xrightarrow{Type-1} M_8$\\ $M_9 \xrightarrow{Type-1} M_{10}$\\ $M_{11} \xrightarrow{Type-1} M_{12}$\end{tabular} & \begin{tabular}[c]{@{}c@{}}$M_1 \xrightarrow{Type-1} M_4$\\ $M_3 \xrightarrow{Type-1} M_6$\\ $M_5 \xrightarrow{Type-1} M_8$\\ $M_7 \xrightarrow{Type-1} M_{10}$\\ $M_9 \xrightarrow{Type-1} M_{12}$\\ $M_{11} \xrightarrow{Type-1} M_2$\end{tabular} & \begin{tabular}[c]{@{}c@{}}$M_1 \xrightarrow{Type-1} M_6$\\ $M_3 \xrightarrow{Type-1} M_8$\\ $M_5 \xrightarrow{Type-1} M_{10}$\\ $M_7 \xrightarrow{Type-1} M_{12}$\\ $M_9 \xrightarrow{Type-1} M_2$\\ $M_{11} \xrightarrow{Type-1} M_4$\end{tabular} & \begin{tabular}[c]{@{}c@{}}$M_1 \xrightarrow{Type-1} M_8$\\ $M_3 \xrightarrow{Type-1} M_{10}$\\ $M_5 \xrightarrow{Type-1} M_{12}$\\ $M_7 \xrightarrow{Type-1} M_2$\\ $M_9 \xrightarrow{Type-1} M_4$\\ $M_{11} \xrightarrow{Type-1} M_6$\end{tabular}	                              
		\end{tabular}
	\end{table}

	\begin{table}[htbp]
		\centering
		\begin{tabular}{cccc}
			\textbf{Round:1, Level:5}                                                                                                                                                                                                        & \textbf{Round:1, Level:6}
			& 	\textbf{Round:2, Level:1}                                                                                                                                                                                                                            & \textbf{Round:2, Level:2}                                                                                                                                                                                                                             \\
			\begin{tabular}[c]{@{}l@{}}$M_1 \xrightarrow{Type-1} M_{10}$\\ $M_3 \xrightarrow{Type-1} M_{12}$\\ $M_5 \xrightarrow{Type-1} M_2$\\ $M_7 \xrightarrow{Type-1} M_4$\\ $M_9 \xrightarrow{Type-1} M_6$\\ $M_{11} \xrightarrow{Type-1} M_8$\end{tabular} & \begin{tabular}[c]{@{}l@{}}$M_1 \xrightarrow{Type-2} M_{12}$\\ $M_3 \xrightarrow{Type-1} M_2$\\ $M_5 \xrightarrow{Type-2} M_4$\\ $M_7 \xrightarrow{Type-1} M_6$\\ $M_9 \xrightarrow{Type-2} M_8$\\ $M_{11} \xrightarrow{Type-1} M_{10}$\end{tabular} &
			\begin{tabular}[c]{@{}c@{}}$M_{12} \xrightarrow{Type-1} M_2$\\ $M_3 \xrightarrow{Type-1} M_1$\\ $M_4 \xrightarrow{Type-1} M_6$\\ $M_7 \xrightarrow{Type-1} M_5$\\ $M_8 \xrightarrow{Type-1} M_{10}$\\ $M_{11} \xrightarrow{Type-1} M_9$\end{tabular} & \begin{tabular}[c]{@{}c@{}}$M_{12} \xrightarrow{Type-1} M_6$\\ $M_3 \xrightarrow{Type-1} M_5$\\ $M_4 \xrightarrow{Type-2} M_{10}$\\ $M_7 \xrightarrow{Type-2} M_9$\\ $M_8 \xrightarrow{Type-1} M_2$\\ $M_{11} \xrightarrow{Type-1} M_1$\end{tabular}	
	\\ & & &\\
			\textbf{Round:2, Level:3}                                                                                                                                                                                                                            & \textbf{Round:3, Level:1}                                                                                                                                                                                                                            & \multicolumn{1}{c}{\textbf{Round:4, Level:1}}                                                                                                                                                                                                        &                                                          \\
			\begin{tabular}[c]{@{}c@{}}$M_{12} \xrightarrow{Type-1} M_4$\\ $M_3 \xrightarrow{Type-1} M_7$\\ $M_{10} \xrightarrow{Type-2} M_2$\\ $M_9 \xrightarrow{Type-2} M_1$\\ $M_8 \xrightarrow{Type-1} M_6$\\ $M_{11} \xrightarrow{Type-1} M_5$\end{tabular} & \begin{tabular}[c]{@{}c@{}}$M_{12} \xrightarrow{Type-2} M_{10}$\\ $M_3 \xrightarrow{Type-2} M_9$\\ $M_2 \xrightarrow{Type-1} M_6$\\ $M_1 \xrightarrow{Type-1} M_5$\\ $M_8 \xrightarrow{Type-1} M_4$\\ $M_{11} \xrightarrow{Type-1} M_7$\end{tabular} & \begin{tabular}[c]{@{}l@{}}$M_9 \xrightarrow{Type-3} M_5$\\ $M_1 \xrightarrow{Type-3} M_7$\\ $M_{11} \xrightarrow{Type-3} M_3$\\ $M_{10} \xrightarrow{Type-3} M_6$\\ $M_2 \xrightarrow{Type-3} M_4$\\ $M_8 \xrightarrow{Type-3} M_{12}$\end{tabular} &  \\
			& & & \\
			\multicolumn{4}{l}{	\textbf{Number of \textit{Flips}$=9=F_{24}$.}}                                     
		\end{tabular}	
	\end{table}
	    
	\subsection{Schedule for $n=28$}
	Vertex Set =\{1, 2, 3, 4, 5, 6, 7, 8, 9, 10, 11, 12, 13, 14, 15, 16, 17, 18, 19, 20, 21, 22, 23, 24, 25, 26, 27, 28\}.\\ Set of Pair of vertices $=\{M_1, M_2, M_3, M_4, M_5, M_6, M_7, M_8, M_9, M_{10}, M_{11}, M_{12}, M_{13}, M_{14}\}$;\\where $M_1$=\{1,2\}, $M_2$=\{3,4\}, $M_3$=\{5,6\}, $M_4$=\{7,8\}, $M_5$=\{9,10\}, $M_6$=\{11,12\}, $M_7$=\{{13},14\}, $M_8$=\{15,16\}, $M_9$=\{17,18\}, $M_{10}$=\{19,20\}, $M_{11}$=\{21,22\}, $M_{12}$=\{23,24\}, $M_{13}$=\{25,26\}, \\$M_{14}$=\{27,28\};\\and $N_1$=$\{M_{14}, M_4\}$, $N_2$=$\{M_{12}, M_6\}$, $N_3$=$\{M_3, M_2\}$, $N_4$=$\{M_8, M_{10}\}$, $N_5$=$\{M_9, M_5\}$, \\$N_6$=$\{M_7, M_{11}\}$, $N_7$=$\{M_1, M_{13}\}$.
		
	\begin{table}[htbp]
		\centering
		\begin{tabular}{cccc}
			\textbf{Round:1, Level:1}                                                                                                                                                                                                                                                                   & \textbf{Round:1, Level:2}                                                                                                                                                                                                                                                                                       & \textbf{Round:1, Level:3}                                                                                                                                                                                                                                                                                       & \textbf{Round:1, Level:4}                                                                                                                                                                                                                                                                                                                                                                                                                                                                                                                                                      \\
			\begin{tabular}[c]{@{}c@{}}$M_1 \xrightarrow{Type-1} M_2$\\ $M_3 \xrightarrow{Type-1} M_4$\\ $M_5 \xrightarrow{Type-1} M_6$\\ $M_7 \xrightarrow{Type-1} M_8$\\ $M_9 \xrightarrow{Type-1} M_{10}$\\ $M_{11} \xrightarrow{Type-1} M_{12}$\\ $M_{13} \xrightarrow{Type-1} M_{14}$\end{tabular} & \begin{tabular}[c]{@{}c@{}}$M_1 \xrightarrow{Type-1} M_4$\\ $M_3 \xrightarrow{Type-1} M_6$\\ $M_5 \xrightarrow{Type-1} M_8$\\ $M_7 \xrightarrow{Type-1} M_{10}$\\ $M_9 \xrightarrow{Type-1} M_{12}$\\ $M_{11} \xrightarrow{Type-1} M_{14}$\\ $M_{13} \xrightarrow{Type-1} M_2$\end{tabular}                     & \begin{tabular}[c]{@{}c@{}}$M_1 \xrightarrow{Type-1} M_6$\\ $M_3 \xrightarrow{Type-1} M_8$\\ $M_5 \xrightarrow{Type-1} M_{10}$\\ $M_7 \xrightarrow{Type-1} M_{12}$\\ $M_9 \xrightarrow{Type-1} M_{14}$\\ $M_{11} \xrightarrow{Type-1} M_2$\\ $ M_{13} \xrightarrow{Type-1} M_4$\end{tabular}                    & \begin{tabular}[c]{@{}c@{}}$M_1 \xrightarrow{Type-1} M_8$\\ $M_3 \xrightarrow{Type-1} M_{10}$\\ $M_5 \xrightarrow{Type-1} M_{12}$\\ $M_7 \xrightarrow{Type-1} M_{14}$\\ $M_9 \xrightarrow{Type-1} M_2$\\ $M_{11} \xrightarrow{Type-1} M_4$\\ $M_{13} \xrightarrow{Type-1} M_6$\end{tabular}               \end{tabular}
		\end{table}
			
		\begin{tabular}{ccc}
			\centering
			\textbf{Round:1, Level:5}                                                                                                                                                                                                                                                                                       & \textbf{Round:1, Level:6}                                                                                                                                                                                                                                                                                       & \textbf{Round:1, Level:7}  \\
			\multicolumn{1}{l}\centering{\begin{tabular}[c]{@{}l@{}}$M_1 \xrightarrow{Type-1} M_{12}$\\ $M_3 \xrightarrow{Type-1} M_{14}$\\ $M_5 \xrightarrow{Type-1} M_2$\\ $M_7 \xrightarrow{Type-1} M_4$\\ $M_9 \xrightarrow{Type-1} M_6$\\ $M_{11} \xrightarrow{Type-1} M_8$\\ $M_{13} \xrightarrow{Type-1} M_{10}$\end{tabular}} & \multicolumn{1}{l}\centering{\begin{tabular}[c]{@{}l@{}}$M_1 \xrightarrow{Type-1} M_{10}$\\ $M_3 \xrightarrow{Type-2} M_{12}$\\ $M_5 \xrightarrow{Type-1} M_{14}$\\ $M_7 \xrightarrow{Type-2} M_2$\\ $M_9 \xrightarrow{Type-1} M_4$\\ $M_{11} \xrightarrow{Type-2} M_6$\\ $M_{13} \xrightarrow{Type-1} M_8$\end{tabular}} & \multicolumn{1}{l}\centering{\begin{tabular}[c]{@{}l@{}}$M_1 \xrightarrow{Type-1} M_{14}$\\ $M_{12} \xrightarrow{Type-1} M_{10}$\\ $M_5 \xrightarrow{Type-1} M_7$\\ $M_2 \xrightarrow{Type-1} M_4$\\ $M_9 \xrightarrow{Type-1} M_{11}$\\ $M_6 \xrightarrow{Type-1} M_8$\\ $M_{13} \xrightarrow{Type-1} M_3$\end{tabular}} 
		\\  & &\\                   
			\textbf{Round:2, Level:1}                                                                                                                                                                                                                                                                   & \textbf{Round:2, Level:2}                                                                                                                                                                                                                                                                                       & \textbf{Round:2, Level:3}                                                                                                                                                                                                                                                                                                                                                                                                                                                                                                                                                                                                                                                                                                                                                                                                                          \\
			\begin{tabular}[c]{@{}c@{}}$M_1 \xrightarrow{Type-1} M_7$\\ $M_{12} \xrightarrow{Type-1} M_8$\\ $M_5 \xrightarrow{Type-1} M_4$\\ $M_2 \xrightarrow{Type-1} M_{10}$\\ $M_9 \xrightarrow{Type-1} M_3$\\ $M_6 \xrightarrow{Type-1} M_{14}$\\ $M_{13} \xrightarrow{Type-1} M_{11}$\end{tabular} & \begin{tabular}[c]{@{}c@{}}$M_1 \xrightarrow{Type-2} M_{11}$\\ $M_{12} \xrightarrow{Type-1} M_4$\\ $M_5 \xrightarrow{Type-2} M_3$\\ $M_2 \xrightarrow{Type-1} M_{14}$\\ $M_9 \xrightarrow{Type-1} M_8$\\ $M_6 \xrightarrow{Type-1} M_{10}$\\ $M_{13} \xrightarrow{Type-1} M_7$\end{tabular}                     & \begin{tabular}[c]{@{}c@{}}$M_{11} \xrightarrow{Type-2} M_{10}$\\ $M_{12} \xrightarrow{Type-1} M_{14}$\\ $M_3 \xrightarrow{Type-1} M_1$\\ $M_2 \xrightarrow{Type-2} M_8$\\ $M_9 \xrightarrow{Type-1} M_7$\\ $M_6 \xrightarrow{Type-1} M_4$\\ $M_{13} \xrightarrow{Type-2} M_5$\end{tabular}                   
		\\     & &\\                              \textbf{Round:3, Level:1}                                                                                                                                                                                                                                                                                       & \textbf{Round:3, Level:2}                                                                                                                                                                                                                                               & \textbf{Round:4, Level:1}                                                                                                                                                                                                                                              \\
			\begin{tabular}[c]{@{}c@{}}$M_{10} \xrightarrow{Type-1} M_4$\\ $M_{12} \xrightarrow{Type-1} M_{13}$\\ $M_3 \xrightarrow{Type-1} M_7$\\ $M_8 \xrightarrow{Type-1} M_{14}$\\ $M_9 \xrightarrow{Type-1} M_1$\\ $M_6 \xrightarrow{Type-1} M_2$\\ $M_5 \xrightarrow{Type-1} M_1$\end{tabular}                        & \begin{tabular}[c]{@{}l@{}}$M_{10} \xrightarrow{Type-2} M_{14}$\\ $M_{12} \xrightarrow{Type-1} M_2$\\ $M_3 \xrightarrow{Type-1} M_{11}$\\ $M_8 \xrightarrow{Type-1} M_4$\\ $M_9 \xrightarrow{Type-1} M_{13}$\\ $M_6 \xrightarrow{Type-2} M_7$\\ $M_5 \xrightarrow{Type-2} M_1$\end{tabular} & \begin{tabular}[c]{@{}l@{}}$M_{14} \xrightarrow{Type-3} M_4$\\ $M_{12} \xrightarrow{Type-3} M_6$\\ $M_3 \xrightarrow{Type-3} M_2$\\ $M_8 \xrightarrow{Type-3} M_{10}$\\ $M_9 \xrightarrow{Type-3} M_5$\\ $M_7 \xrightarrow{Type-3} M_{11}$\\ $M_1 \xrightarrow{Type-3} M_{13}$\end{tabular}\\
			& & \\
			\multicolumn{3}{l}{\textbf{Number of \textit{Flips}$=11=\lceil F_{28} \rceil$.}}	
	\end{tabular}

	\section{Tabular IPL Schedule}
	\label{IPL Schedule}
	
	In this section, we present a schedule of \textbf{Indian Premier League (IPL)} using our algorithm\ref{alg:TTP-2}. IPL is a Double Round-robin Tournament of eight teams. The proposed schedule is presented in Table-1 where the teams are represented as the following:\\	
	
	\begin{tabular}{ll}
		\textbf{KOL} $\rightarrow$ Kolkata Knight Riders & \textbf{MUM} $\rightarrow$ Mumbai Indians\\ \textbf{CHE} $\rightarrow$ Chennai Super Kings & \textbf{BANG} $\rightarrow$ Royal Challengers Bangalore\\ \textbf{RAJ} $\rightarrow$ Rajasthan Royals & \textbf{DEL} $\rightarrow$ Delhi Capitals\\ \textbf{HYD} $\rightarrow$ Sunrisers Hyderabad & \textbf{PUN} $\rightarrow$ Kings XI Punjab  
	\end{tabular}
		
	\begin{table}[h]
		\centering
		\caption{\textbf{\textit{Proposed Indian Premier League Schedule}}}
		\centering
		\begin{tabular}{ccc}
			\begin{tabular}{|c|c|}
				\hline
				\textbf{Match Day} & \textbf{1}\\
				\hline
				\hline
				\textit{Away} & \textit{Home} \\
				\hline 
				\hline
				MUM & KOL \\ 
				\hline 
				HYD & RAJ \\ 
				\hline 
				CHE & DEL \\ 
				\hline 
				BANG & PUN \\ 
				\hline
				\hline	
				\hline
				\textbf{Match Day} & \textbf{4}\\
				\hline
				\hline
				\textit{Away} & \textit{Home}\\
				\hline
				\hline 
				RAJ & MUM \\ 
				\hline 
				KOL & HYD \\ 
				\hline
				DEL & BANG \\ 
				\hline 
				PUN & CHE \\ 
				\hline
				\hline	
				\hline
				\textbf{Match Day} & \textbf{7}\\
				\hline
				\hline
				\textit{Away} & \textit{Home}\\
				\hline
				\hline  
				PUN & HYD \\ 
				\hline 
				DEL & MUM \\ 
				\hline 
				RAJ & BANG \\ 
				\hline
				KOL & CHE \\ 
				\hline
				\hline	
				\hline
				\textbf{Match Day} & \textbf{10}\\
				\hline
				\hline
				\textit{Away} & \textit{Home}\\
				\hline
				\hline  
				MUM & HYD \\ 
				\hline 
				KOL & RAJ \\ 
				\hline 
				BAG & CHE \\ 
				\hline 
				PUN & DEL \\
				\hline
				\hline	
				\hline
				\textbf{Match Day} & \textbf{13}\\
				\hline
				\hline
				\textit{Away} & \textit{Home}\\
				\hline
				\hline   
				MUM & BANG \\ 
				\hline 
				HYD & CHE \\ 
				\hline 
				KOL & PUN \\ 
				\hline 
				RAJ & DEL \\ 
				\hline
			\end{tabular}
			& \!\!\!\!\!
			\begin{tabular}{|c|c|}	
				\hline
				\textbf{Match Day} & \textbf{2}\\
				\hline
				\hline
				\textit{Away} & \textit{Home} \\
				\hline 
				\hline
				MUM & RAJ \\ 
				\hline 
				HYD & KOL \\ 
				\hline 
				CHE & PUN \\ 
				\hline 
				BANG & DEL \\ 
				\hline
				\hline			
				\hline
				\textbf{Match Day} & \textbf{5}\\
				\hline
				\hline
				\textit{Away} & \textit{Home} \\
				\hline
				\hline
				MUM & DEL \\ 
				\hline 
				HYD & PUN \\ 
				\hline 
				CHE & KOL \\ 
				\hline 
				BANG & RAJ \\ 
				\hline 
				\hline			
				\hline
				\textbf{Match Day} & \textbf{8}\\
				\hline
				\hline
				\textit{Away} & \textit{Home} \\
				\hline
				\hline
				DEL & HYD \\ 
				\hline 
				PUN & MUM \\ 
				\hline 
				CHE & RAJ \\ 
				\hline 
				BANG & KOL \\ 
				\hline  
				\hline			
				\hline
				\textbf{Match Day} & \textbf{11}\\
				\hline
				\hline
				\textit{Away} & \textit{Home}\\
				\hline
				\hline  
				BANG & MUM \\ 
				\hline 
				CHE & HYD \\ 
				\hline
				PUN & KOL \\ 
				\hline 
				DEL & RAJ \\ 
				\hline  
				\hline			
				\hline
				\textbf{Match Day} & \textbf{14}\\
				\hline
				\hline
				\textit{Away} & \textit{Home}\\
				\hline
				\hline   
				CHE & MUM \\ 
				\hline 
				BANG & HYD \\ 
				\hline 
				DEL & KOL \\ 
				\hline 
				PUN & RAJ \\ 
				\hline   
			\end{tabular}
			& \!\!\!\!\!
			\begin{tabular}{cc}
				\hline
				\multicolumn{1}{|c|}{\textbf{Match Day}} & \multicolumn{1}{|c|}{\textbf{3}}\\
				\hline
				\hline
				\multicolumn{1}{|c|}{\textit{Away}} & \multicolumn{1}{|c|}{\textit{Home}} \\
				\hline 
				\hline 
				\multicolumn{1}{|c|}{RAJ} & \multicolumn{1}{|c|}{HYD} \\ 
				\hline 
				\multicolumn{1}{|c|}{KOL} & \multicolumn{1}{|c|}{MUM} \\ 
				\hline 
				\multicolumn{1}{|c|}{DEL} & \multicolumn{1}{|c|}{CHE} \\ 
				\hline 
				\multicolumn{1}{|c|}{PUN} & \multicolumn{1}{|c|}{BANG} \\ 
				\hline
				\hline
				\hline
				\multicolumn{1}{|c|}{\textbf{Match Day}} & \multicolumn{1}{|c|}{\textbf{6}}\\
				\hline
				\hline
				\multicolumn{1}{|c|}{\textit{Away}} & \multicolumn{1}{|c|}{\textit{Home}}\\
				\hline
				\hline 
				\multicolumn{1}{|c|}{MUM} & \multicolumn{1}{|c|}{PUN} \\ 
				\hline 
				\multicolumn{1}{|c|}{HYD} & \multicolumn{1}{|c|}{DEL} \\ 
				\hline 
				\multicolumn{1}{|c|}{RAJ} & \multicolumn{1}{|c|}{CHE} \\ 
				\hline 
				\multicolumn{1}{|c|}{KOL} & \multicolumn{1}{|c|}{BANG} \\ 
				\hline
				\hline
				\hline
				\multicolumn{1}{|c|}{\textbf{Match Day}} & \multicolumn{1}{|c|}{\textbf{9}}\\
				\hline
				\hline
				\multicolumn{1}{|c|}{\textit{Away}} & \multicolumn{1}{|c|}{\textit{Home}}\\
				\hline
				\hline  
				\multicolumn{1}{|c|}{MUM} & \multicolumn{1}{|c|}{CHE} \\ 
				\hline 
				\multicolumn{1}{|c|}{HYD} & \multicolumn{1}{|c|}{BANG} \\ 
				\hline 
				\multicolumn{1}{|c|}{KOL} & \multicolumn{1}{|c|}{DEL} \\ 
				\hline 
				\multicolumn{1}{|c|}{RAJ} & \multicolumn{1}{|c|}{PUN} \\ 
				\hline
				\hline
				\hline
				\multicolumn{1}{|c|}{\textbf{Match Day}} & \multicolumn{1}{|c|}{\textbf{12}}\\
				\hline
				\hline
				\multicolumn{1}{|c|}{\textit{Away}} & \multicolumn{1}{|c|}{\textit{Home}} \\
				\hline
				\hline		 
				\multicolumn{1}{|c|}{HYD} & \multicolumn{1}{|c|}{MUM} \\ 
				\hline 
				\multicolumn{1}{|c|}{RAJ} & \multicolumn{1}{|c|}{KOL} \\ 
				\hline 
				\multicolumn{1}{|c|}{CHE} & \multicolumn{1}{|c|}{BANG} \\ 
				\hline 
				\multicolumn{1}{|c|}{DEL} & \multicolumn{1}{|c|}{PUN} \\  
				\hline
				\hline
				&\\
				& \\
				
				&  \\
				
				& \\ 
				
				& \\ 
				
				& \\ 
				
				& \\  
				
			\end{tabular}\\
		& &\\
		& &\\
		\multicolumn{3}{l}{This schedule gives $15\%$ better result than actual \textbf{IPL-2019} schedule} \\
		\multicolumn{3}{l}{in terms of total distance traveled by all the teams.} 
		\end{tabular}
	\end{table}

\end{document}